\def\R{\mathbb R}
\def\e{\mathrm e}
\def\O{\mathcal O}
\renewcommand{\i}{{\mathrm i}}
\def\d{\mathrm d}
\def\S{\mathcal S}
\def\diag{\mathrm{diag}}
\newtheorem{lem}{Lemma}[section]
\newtheorem{prop}[lem]{Proposition}
\newtheorem{theorem}[lem]{Theorem}
\newtheorem{de}[lem]{Definition}
\newtheorem{rem}[lem]{Remark}
\begin{document}

\begin{frontmatter}

\title{Fulop-Tsutsui interactions on quantum graphs}
\author[label1]
{Taksu Cheon \corref{cor1}}
\ead{taksu.cheon@kochi-tech.ac.jp}
\author[label1]
{Ond\v rej Turek}
\ead{ondrej.turek@kochi-tech.ac.jp}
\address
[label1]
{Laboratory of Physics, Kochi University of Technology,
Tosa Yamada, Kochi 782-8502, Japan}
\cortext[cor1]{corresponding author}

\date{July 1, 2010}

\begin{abstract}
We examine  scale invariant Fulop-Tsutsui couplings in a quantum vertex of a general degree $n$.
We demonstrate that essentially same scattering amplitudes as for the free coupling
can be achieved for two $(n-1)$-parameter Fulop-Tsutsui subfamilies 
if $n$ is odd, and for three $(n-1)$-parameter Fulop-Tsutsui subfamilies if $n$ is even.
We also work up an approximation scheme for a general Fulop-Tsutsui vertex, 
using only $n$ $\delta$ function potentials.
\end{abstract}

\begin{keyword}
Schrodinger operator \sep singular vertex
\sep scale invariance \sep quantum wire
\PACS 03.65.-w \sep 03.65.Db \sep 73.21.Hb%
%
\end{keyword}

\end{frontmatter}


\section{Introduction}
%

The scale invariant interactions on graph, originally 
discovered by Fulop and Tsutsui 
in $n=2$ singular vertex \cite{FT00}, 
occupies a special position in the theory of quantum graphs \cite{EKST08}.
Unlike conventional $\delta$-like interaction, Fulop-Tsutsui interaction retains 
the invariance of the classical counterpart.
One of its interesting features is that it has no classical limit in usual sense, since
there is no parameter in the system bearing the scale of Planck constant. 
This fact makes a system with Fulop-Tsutsui interaction 
a potential quantum device for partial
transmission in a macroscopic scale.
%
%
Unlike the simplest case of $n=2$, 
two distinct Fulop-Tsutsui subfamilies are found
in the case of $n=3$ singular vertex \cite{CET09}.
One intriguing corollary is that there are two distinct ``free-like'' connection
condition for $n=3$ quantum graph.
A classification and parameterization of Fulop-Tsutsui couplings up to the degree $4$ can be 
found in a recent paper~\cite{Be1}.
%

In this letter, we examine the Fulop-Tsutsui subfamilies of singular
vertex of general degree $n$.
For a given $n$, we find $(n-1)$ different types of Fulop-Tsutsui subfamilies, 
which are classified by the ranks of the boundary matrices.
Through the examination of scattering matrices, we identify those subfamilies 
that contain the free-like connection conditions \cite{ES89}.  Two distinct free-like conditions 
are found for odd $n$ and three for even $n$.
We also reexamine the finite approximation of singular vertices, and
find that, for Fulop-Tsutsui subfamily of degree $n$, a construction 
with $n$ rescaled $\delta$s is possible, which is considerably simpler than 
the construction of general vertex requiring $n^2$ parameters \cite{CET10}.

\section{Formalism}
The singular vertex in a star graph of degree $n$ is described by
the connection condition of the form  \cite{KS99}
\begin{equation}\label{AB}
A\Psi(0) + B\Psi'(0) = 0
\end{equation}
where
\begin{eqnarray}
\Psi(x) = 
  \begin{pmatrix} \psi_1(x) \cr \vdots \cr \psi_n(x) \end{pmatrix} 
\end{eqnarray}
is an $n$ component vector with each entry $\psi_j(x)$ representing 
the wavefunction on the  $j$-th line coming out from the vertex $x=0$,
and $A$, $B$ are $n \times n$ matrices
with the properties $(AB)^\dagger=B^\dagger A$ and ${\rm rank}(A,B)=n$.
Fulop-Tsutsui interaction is a subset of this generic connection condition
with a property that its scattering matrix is constant as a function of 
the momentum of incoming particle.
Since $\Psi(0)$ and $\Psi'(0)$ have different dimensional scales, we
need to have no coupling between them in the connection condition.
In this paper we will work with so-called $ST$-form of boundary conditions introduced in \cite{CET10} -- in this form, the Fulop-Tsutsui subfamily of singular vertex is characterized by the connection condition
\begin{eqnarray}
\label{STF}
\begin{pmatrix} I^{(m)} & T \cr 0 & 0 \end{pmatrix} \Psi'(0)
=
\begin{pmatrix} 0 & 0 \cr -T^\dagger & I^{(n-m)} \end{pmatrix} \Psi(0)
\end{eqnarray}
where the symbol $I^{(m)}$ represents the identity matrix $m\times m$; the value $m$ corresponds to $\mathrm{rank}(B)$ from \eqref{AB}.
First $m$ equations of this connection condition involve only $\Psi'(0)$
and the last $(n-m)$ lines only $\Psi(0)$, and the $m \times (n-m)$ 
complex matrix $T=\left( t_{ij}\right)$ contains only dimensionless parameters.
Since $m$ can run from $1$ to $n-1$, there are $n-1$ distinct Tsutsui-Fulop
subfamilies, we index them by the ranks of $A$ 
and $B$:
\begin{eqnarray}
( {\rm rank}(A), {\rm rank}(B) ) = (n-1, 1), (n-2, 2), ... .., (1, n-1) .
\end{eqnarray}

Let us consider a quantum particle coming in from the $j$-th line scattered at
the singular vertex into all lines.  The process is described by the wavefunction
vector as 
\begin{eqnarray}
\Psi^{(j)}(x) = 
  \begin{pmatrix} 0 \cr \vdots \cr 1_j \cr \vdots \cr 0 \end{pmatrix} 
  e^{-{\textrm i} kx}
+ \begin{pmatrix} {\cal T}_{1j} \cr \vdots \cr {\cal R}_{j} \cr \vdots \cr {\cal T}_{nj} \end{pmatrix} 
  e^{{\textrm i} kx}
\end{eqnarray}
where ${\cal R}_j$ is the reflection amplitude for the line $j$, and ${\cal T}_{ij}$, 
is the transmission amplitude from the $j$-th to the $i$-th line.
We can collect $\Psi^{(j)}(0)$ and $\Psi'^{(j)}(0)$ with $j=1, ..., n$ to form 
$n \times n$ matrices
\begin{eqnarray}
&&\!\!\!\!\!\!\!\!
\left(\Psi^{(1)}(0) \cdots \Psi^{(n)}(0)\right) = {\cal S}(k) + I,
\nonumber \\
&&\!\!\!\!\!\!\!\!
\left(\Psi'^{(1)}(0) \cdots \Psi'^{(n)}(0)\right) = {\textrm i}k ( {\cal S}(k) - I ) .
\end{eqnarray}
Here, ${\cal S}$ is the scattering matrix given by
\begin{eqnarray}
\label{ScatMat}
{\cal S}(k)=\begin{pmatrix} 
 {\cal R}_{1}(k)   & {\cal T}_{12}(k) & \cdots & {\cal T}_{1n}(k) \cr 
 {\cal T}_{21}(k) & {\cal R}_{2}(k)   & \cdots & {\cal T}_{2n}(k) \cr
 \vdots & & & \vdots \cr
 {\cal T}_{n1}(k) & {\cal T}_{n2}(k)& \cdots & {\cal R}_{n}(k) \end{pmatrix} .
\end{eqnarray}
We can combine the connection conditions (\ref{STF}) for 
all $\Psi^{(j)}(x)$, $j=1,\ldots, n$, 
and after trivial rearrangement, to get rid of the spurious $\textrm{i}k$ factor, 
we obtain
%
\begin{eqnarray}
\label{ST}
\begin{pmatrix} I^{(m)} & T \cr  T^\dagger & -I^{(n-m)}  \end{pmatrix} 
{\cal S}
=
\begin{pmatrix} I^{(m)} & T \cr -T^\dagger & I^{(n-m)} \end{pmatrix}\,. 
\end{eqnarray}
The scattering matrix can be expressed in the explicit form
%
\begin{equation}\label{S}
\S=\left(\begin{array}{cc}
\left(I^{(m)}+TT^\dagger\right)^{-1} \left(I^{(m)}-TT^\dagger\right) & \left(I^{(m)}+TT^\dagger\right)^{-1}2 T \\
\left(I^{(n-m)}+T^\dagger T\right)^{-1}  2 T^\dagger & -\left(I^{(n-m)}+T^\dagger T\right)^{-1} \left(I^{(n-m)}-T^\dagger T\right)
\end{array}\right)\,.
\end{equation}
%
As anticipated, the scattering matrix has no $k$-dependence,
and a quantum wave on a line, of any wave number, is branched to other lines
with fixed ratio ${\cal S}_{ij}$.
For the Fulop-Tsutsui subfamily of $(n-m, m)$,  $T$ is a
complex matrix of dimensions $m\times(n-m)$, thus offers $2m(n-m)$ parameters.
This number becomes maximum at $m=[\frac{n}{2}]$ to give $[\frac{n^2}{2}]$.
Note that the scattering matrix ${\cal S}$ has $2n + 2 \frac{n^2-n}{2}$ 
$= n^2$ parameters.  Therefore, the full tuning of ${\cal S}$ 
with Fulop-Tsutsui interaction is not attainable.

\begin{rem}\label{shift}
Throughout the whole paper, the elements of the matrix $T$ arising in the $ST$-form~\eqref{ST} will be denoted $t_{jl}$ for $j=1,\ldots,m$ and $l=m+1,\ldots,n$ (note well the shift of the column indices: we use numbers $m+1,\ldots,n$ instead of $1,\ldots,n-m$).
\end{rem}

%
\section{Free and free-like connections}

A prominent example of the Fulop-Tsutsui interaction is the \emph{free connection} which is given by the boundary conditions
\begin{eqnarray}
\psi_1(0)=\psi_2(0)=\cdots=\psi_n(0)=:\psi(0) \, , 
\qquad \sum_{j=1}^{n}\psi_j'(0)=0\,.
\end{eqnarray}
It is easy to see that its $ST$-form \eqref{STF} is obtained by setting $m=1$ and $T=\left(\begin{array}{cccc}1&1&\cdots&1\end{array}\right)$. Therefore, with regard to \eqref{ScatMat}, the scattering properties of the free connection are described by the scattering matrix
\begin{eqnarray}
\label{frees}
{\cal S} = 
\begin{pmatrix} 
  \frac{2}{n}-1   & \frac{2}{n} & \cdots & \frac{2}{n} \cr 
  \frac{2}{n} & \frac{2}{n}-1   &  & \vdots \cr
  \vdots &    & \ddots  &  \frac{2}{n} \cr
  \frac{2}{n} & \cdots & \frac{2}{n} & \frac{2}{n}-1 
\end{pmatrix} .
\end{eqnarray}
Since it will be useful many times in the rest of the paper, we introduce here the symbol $J^{(p,q)}$ that represents the rectangular matrix $p\times q$ all of whose elements are equal to $1$. If $p=q$, we will abbreviate the notation to $J^{(p)}$. Using this symbol, \eqref{frees} can be rewritten as $\S=\frac{2}{n}J^{(n)}-I^{(n)}$.

We see immediately from \eqref{frees} that the absolute values of 
the scattering amplitudes are equal to
\begin{equation}\label{RT}
|{\cal R}_j|=1-\frac{2}{n}\qquad\text{and}\qquad|{\cal T}_{ij}|=\frac{2}{n}
\end{equation}
for all $i,j=1,\ldots,n$.

Now we proceed to the free-like connection. We start our considerations at the definition.

\begin{de}
Any vertex coupling such that its scattering amplitudes satisfy \eqref{RT} is called \emph{free-like vertex coupling}, and the corresponding boundary conditions are called \emph{free-like boundary conditions}.
\end{de}

Apart from the phases of the reflection and transmission coefficients ${\cal T}_{ij}$ and ${\cal R}_j$,
which themselves are not observable,
all the free-like boundary conditions give the physical amplitude identical to the free case.
In the rest of this section, we aim to find all free-like vertex couplings. We will derive a full and at the same time simple characterization of this subfamily in terms of boundary conditions. 

In the first proposition, we answer the question what structure the \emph{scattering matrix} of a free-like vertex coupling can have.

\begin{prop}\label{possibleS}
Let the scattering matrix $\mathcal{S}$ in a vertex of degree $n$ satisfies the conditions~\eqref{RT}, i.e. the diagonal and off-diagonal terms of $\mathcal{S}$ are of moduli $1-\frac{2}{n}$ and $\frac{2}{n}$, respectively. Then $\mathcal{S}$ falls within one of the following three cases:
\begin{itemize}
\item $\S=D^{-1}\left(-I^{(n)}+\frac{2}{n}J^{(n)}\right)D$ for $D=\diag\left(1,\e^{\i\xi_2},\ldots,\e^{\i\xi_n}\right)$, where $\xi_j\in\R$ for all $j=2,\ldots,n$,
\item $\S=D^{-1}\left(I^{(n)}-\frac{2}{n}J^{(n)}\right)D$ for $D=\diag\left(1,\e^{\i\xi_2},\ldots,\e^{\i\xi_n}\right)$, where $\xi_j\in\R$ for all $j=2,\ldots,n$,
\item $n$ is even and $\S=P^{-1}D^{-1}\left(\begin{array}{c|c}I^{(\frac{n}{2})}-\frac{2}{n}J^{(\frac{n}{2})}&\frac{2}{n}J^{(\frac{n}{2})}\\\hline\frac{2}{n}J^{(\frac{n}{2})}&-I^{(\frac{n}{2})}+\frac{2}{n}J^{(\frac{n}{2})}\end{array}\right)DP$, where
$P$ is a permutation matrix and $D=\diag\left(1,\e^{\i\xi_2},\ldots,\e^{\i\xi_n}\right)$, where
$\xi_j\in\R$ for all $j=2,\ldots,n$.
\end{itemize}
\end{prop}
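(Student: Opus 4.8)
The plan is to exploit the fact that a Fulop–Tsutsui scattering matrix is, by the explicit form \eqref{S}, both Hermitian and unitary, hence an involution, $\S^2=I^{(n)}$. First I would record what this means for \eqref{RT}: Hermiticity forces the diagonal entries to be real, so $\mathcal{R}_j=\epsilon_j\left(1-\frac2n\right)$ with $\epsilon_j\in\{+1,-1\}$. Passing to the orthogonal projection $\Pi:=\frac12\left(\S+I^{(n)}\right)$ onto the $+1$ eigenspace, so that $\S=2\Pi-I^{(n)}$, the conditions \eqref{RT} translate into $\Pi_{jj}\in\left\{\frac1n,\frac{n-1}{n}\right\}$ and $|\Pi_{ij}|=\frac1n$ for $i\neq j$, where $\Pi_{jj}=\frac{n-1}{n}$ holds exactly when $\epsilon_j=+1$.

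The second step is a rank count that produces the three cases. Writing $p$ for the number of indices with $\epsilon_j=+1$, the rank of $\Pi$ equals $\mathrm{tr}\,\Pi=p\cdot\frac{n-1}{n}+(n-p)\cdot\frac1n=(p+1)-\frac{2p}{n}$, which must be a nonnegative integer; hence $n\mid 2p$ and therefore $p\in\{0,\tfrac n2,n\}$, the middle value occurring only for even $n$. These give projections of rank $1$, $\tfrac n2$ and $n-1$ respectively, matching exactly the three items of the statement.

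The two extreme cases are then immediate. If $p=0$ then $\Pi$ is a rank-one projection $vv^\dagger$ with $|v_j|^2=\Pi_{jj}=\frac1n$; writing $v_j=\frac1{\sqrt n}\e^{\i\xi_j}$ and normalising $\xi_1=0$ gives $\S=2vv^\dagger-I^{(n)}=D^{-1}\!\left(-I^{(n)}+\frac2n J^{(n)}\right)\!D$ with $D=\diag\left(1,\e^{\i\xi_2},\ldots,\e^{\i\xi_n}\right)$, the first item. The case $p=n$ is identical after replacing $\Pi$ by the complementary rank-one projection $I^{(n)}-\Pi$, and yields the second item.

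The real work is the middle case $p=\tfrac n2$, which I expect to be the main obstacle. Here I would use the Gram description $f_i:=\Pi e_i$, for which $\langle f_i,f_j\rangle=\Pi_{ij}$ and $\|f_i\|^2=\Pi_{ii}$. For two indices $i,j$ with $\epsilon=-1$ one has $\|f_i\|=\|f_j\|=\frac1{\sqrt n}$ and $|\langle f_i,f_j\rangle|=\frac1n=\|f_i\|\,\|f_j\|$, so the Cauchy–Schwarz inequality holds with equality and the vectors $\{f_i:\epsilon_i=-1\}$ are pairwise parallel; the dual argument applied to $I^{(n)}-\Pi$ makes the vectors indexed by $\epsilon_i=+1$ parallel as well. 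After a permutation $P$ gathering the indices with $\epsilon_i=+1$ first, these two parallelism statements collapse three of the four blocks of $\Pi$ to rank-one matrices of unimodular phases, and a diagonal phase conjugation $D$ (normalised by $\xi_1=0$) turns each of them into $\frac1n J^{(n/2)}$. It remains to pin down the surviving block, of the form $I^{(n/2)}-\frac1n\mu\mu^\dagger$: imposing idempotency $\Pi^2=\Pi$ on the off-diagonal block gives the relation $\left(\sum_l\bar\mu_l\right)\mu_i=\frac n2$ for every $i$, which forces all $\mu_i$ equal and hence $\mu\mu^\dagger=J^{(n/2)}$. Undoing the two conjugations produces precisely the block matrix of the third item, completing the proof.
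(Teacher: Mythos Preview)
Your proof is correct and takes a genuinely different route from the paper's.

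The paper argues directly with $\S$: it uses Hermiticity to make the diagonal entries $\pm(1-\tfrac2n)$, introduces $p$ as the number of $+$ signs, and then treats the cases $p=0$, $p=n$, and $1\le p\le n-1$ separately. In each case it conjugates by suitable diagonal phase matrices to normalize one row, then computes scalar products of rows and invokes unitarity to force all remaining phases to vanish. Only at the very end of the mixed case does the scalar product of the first and $(p+1)$-st rows give $(n-2p)\tfrac{4}{n^2}=0$, pinning $p=\tfrac n2$.

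Your approach front-loads this trichotomy: passing to the orthogonal projection $\Pi=\tfrac12(\S+I)$ turns the moduli constraints into $\Pi_{jj}\in\{\tfrac1n,\tfrac{n-1}{n}\}$, $|\Pi_{ij}|=\tfrac1n$, and the single observation $\mathrm{rank}\,\Pi=\mathrm{tr}\,\Pi=(p+1)-\tfrac{2p}{n}\in\mathbb Z$ immediately yields $p\in\{0,\tfrac n2,n\}$. The extreme cases then drop out for free from the rank-one projection structure $\Pi=vv^\dagger$ (or its complement), with no row-by-row phase chase. For $p=\tfrac n2$ you replace the paper's repeated orthogonality computations by the Cauchy--Schwarz equality criterion on the Gram vectors $f_i=\Pi e_i$, which transparently forces the parallelism that collapses three of the four blocks; the idempotency relation $(\sum_l\bar\mu_l)\mu_i=\tfrac n2$ then finishes the last block. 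This is more conceptual and shorter, and it explains \emph{why} only three values of $p$ can occur rather than discovering it as a by-product. The paper's approach, by contrast, is entirely elementary (no spectral language) and perhaps easier to follow line by line, at the cost of more explicit calculation.

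One cosmetic point: with your convention $v_j=\tfrac1{\sqrt n}\e^{\i\xi_j}$ one gets $2vv^\dagger=D\bigl(\tfrac2nJ^{(n)}\bigr)D^{-1}$ rather than $D^{-1}\bigl(\tfrac2nJ^{(n)}\bigr)D$; this is of course absorbed by replacing $\xi_j\mapsto-\xi_j$, so the stated form is recovered.
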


\begin{proof}
Clearly, any connection condition resulting in free-like scattering amplitude has to belong
to the Fulop-Tsutsui family, because any other connection condition results 
in $k$-dependence \cite{FKW07}.
Consequently, the scattering matrix $\S$ has the form \eqref{S}. It is important to note that any scattering matrix $\S$ is unitary, cf.~\cite{KS00,Ha00}, and so is $\S$. In the case of the Fulop-Tsutsui coupling, $\S$ is also Hermitean, which is obvious from~\eqref{S}. Therefore we can write, with regard to the scattering amplitudes \eqref{RT},
%
%
\begin{eqnarray}
\mathcal{R}_j=\pm\left(1-\frac{2}{n}\right)\qquad\text{and}\qquad \mathcal{T}_{ij}=\frac{2}{n}\,\e^{\i\alpha_{ij}}
\end{eqnarray}
for $\alpha_{ij}=-\alpha_{ji}$.

Let $p$ denote the number of diagonal elements of $\S$ equal to $1-\frac{2}{n}$, consequently $\S$ has $n-p$ diagonal elements equal to $\frac{2}{n}-1$. We divide the rest of the proof into three parts according to the value of $p$.

\emph{Case $p=0$.} Let us suppose that $p=0$, i.e. $\mathcal{R}_j=\frac{2}{n}-1$ for all $j=1,\ldots,n$. We define
\begin{eqnarray}
D=\diag\left(1,\e^{\i\alpha_{12}},\e^{\i\alpha_{13}},\ldots,\e^{\i\alpha_{1n}}\right)
\end{eqnarray}
and set $\hat{\S}=D\S D^{-1}$. Matrix $\hat{\S}$ has the following properties:
\begin{itemize}
\item all its diagonal elements are equal to $\frac{2}{n}-1$,
\item all off-diagonal elements in the first row and in the first column are equal to $\frac{2}{n}$,
\item for any $i,j\in\{2,\ldots,n\}$, the $(i,j)$-th element of $\hat{\S}$ equals $\frac{2}{n}\e^{\i\left(\alpha_{ij}+\alpha_{1i}-\alpha_{1j}\right)}$,
\item $\hat{\S}$ is unitary, since both $\S$ and $D$ are unitary.
\end{itemize}
Let $i\in\{2,\ldots,n\}$. The scalar product of the first row of $\hat{\S}$ and the $i$-th one equals
\begin{multline}
\sum_{j\in\{2,\ldots,n\},j\neq i} \frac{2}{n}\cdot\frac{2}{n}\,\e^{\i\left(\alpha_{ij}+\alpha_{1i}-\alpha_{1j}\right)}+\left(\frac{2}{n}-1\right)\cdot\frac{2}{n}+\frac{2}{n}\cdot\left(\frac{2}{n}-1\right)\\
=-\frac{4}{n^2}\left(n-2-\sum_{j\in\{2,\ldots,n\},j\neq i}\e^{\i\left(\alpha_{ij}+\alpha_{1i}-\alpha_{1j}\right)}\right)\,.
\end{multline}
This number has to equal $0$ due to the unitarity of $\hat{\S}$, hence we deduce that $\e^{\i\left(\alpha_{ij}+\alpha_{1i}-\alpha_{1j}\right)}=1$ for all $i,j=2,\ldots,n$, $i\neq j$. In other words, \emph{all off-diagonal elements of $\hat{\S}$ are equal to $\frac{2}{n}$}, i.e. $\hat{\S}=\frac{2}{n}J^{(n)}-I^{(n)}$. Consequently, $S=D^{-1}\hat{\S}D=D^{-1}\left(\frac{2}{n}J^{(n)}-I^{(n)}\right)D$.

\emph{Case $p=n$.} If $p=n$, it suffices to choose
\begin{eqnarray}
D=\diag\left(1,\e^{\i(\alpha_{12}+\pi)},\e^{\i(\alpha_{13}+\pi)},\ldots,\e^{\i(\alpha_{1n}+\pi)}\right)\,,
\end{eqnarray}
and proceed similarly like in the previous case, $p=0$.

\emph{Case $1\leq p\leq n-1$.} Now matrix $\S$ contains at least one diagonal element equal to $1-\frac{2}{n}$ and at least one diagonal element equal to $\frac{2}{n}-1$. In this case, we begin with a rearrangement of $\S$, done by a simultaneous permutation of the rows and columns of $\S$. Let $P$ be such a permutation matrix that
\begin{eqnarray}
\left(P^{-1}\S P\right)_{jj}=1-\frac{2}{n} \quad \text{for $j=1,\ldots,p$}\,;
\end{eqnarray}
in what follows we use the notation $\S_P=P\S P^{-1}$.

In the next step, we define
\begin{eqnarray}
\hat{D}=\diag\left(1,\e^{\i(\alpha_{12}+\pi)},\ldots,\e^{\i(\alpha_{1p}+\pi)},\e^{\i\alpha_{1,p+1}},\ldots,\e^{\i\alpha_{1n}}\right)
\end{eqnarray}
and set $\hat{\S}=\hat{D}\S_P\hat{D}^{-1}$. The following properties of $\hat{\S}$ are important for our considerations:
\begin{itemize}
\item $(\hat{S})_{jj}=1-\frac{2}{n}$ for $j=1,\ldots,p$; $(\hat{S})_{jj}=\frac{2}{n}-1$ for $j=p+1,\ldots,n$,
\item the first row of $\hat{S}$ and the transposed first column are equal to
\begin{eqnarray}
\left(
\underbrace{\left(1-\frac{2}{n}\right)\qquad-\frac{2}{n}\qquad\ldots\qquad-\frac{2}{n}}_{p}\qquad\underbrace{\frac{2}{n}\qquad\ldots\qquad\frac{2}{n}}_{n-p}
\right)\,,
\end{eqnarray}
\item for any $i,j\in\{2,\ldots,n\}$, $i\neq j$, the $(i,j)$-th element of $\hat{\S}$ has modulus $\frac{2}{n}$, 
\item the $(p+1)$-st row of $\hat{S}$ is equal to
\begin{multline}
\!\!\!\!
\left(\underbrace{-\frac{2}{n}\e^{\i\left(\alpha_{p+1,l}+\alpha_{1,p+1}-\alpha_{1l}\right)}\quad\ldots\quad-\frac{2}{n}\e^{\i\left(\alpha_{p+1,p}+\alpha_{1,p+1}-\alpha_{1p}\right)}}_{p}\right.\\
\left.\underbrace{\left(\frac{2}{n}-1\right)\quad\frac{2}{n}\e^{\i\left(\alpha_{p+1,p+2}+\alpha_{1,p+1}-\alpha_{1,p+2}\right)}\quad\ldots\quad\frac{2}{n}\e^{\i\left(\alpha_{p+1,n}+\alpha_{1,p+1}-\alpha_{1n}\right)}}_{n-p}\right)\,,
\end{multline}
\item $\hat{\S}$ is unitary, since $\hat{\S}=\hat{D}P\S P^{-1}\hat{D}^{-1}$ and $\S$, $P$, $\hat{D}$ are unitary,
\item $\hat{\S}$ is Hermitean, since $\hat{\S}=\hat{D}P\S P^{-1}\hat{D}^{-1}=\hat{D}P\S P^{\dagger}\hat{D}^{\dagger}$ and $\S$ is Hermitean.
\end{itemize}
Let $j\in\{2,\ldots,p\}$. Similarly as in the case $p=n$, we deduce from the scalar product of the first row of $\hat{\S}$ and the $j$-th one ($j=2,\ldots,p$) that the first $p$ rows of $\hat{\S}$ can be written in this way:
\begin{eqnarray}
\left(\begin{array}{c|c}
I^{(p)}-\frac{2}{n} J^{(p)} & \frac{2}{n} J^{(p,n-p)}
\end{array}\right)\,.
\end{eqnarray}

Since $\hat{S}$ is Hermitean, lower left submatrix of $\hat{S}$ is equal to the transposed upper right submatrix, therefore
\begin{eqnarray}
\hat{\S}=\left(\begin{array}{c|cccc}
I^{(p)}-\frac{2}{n}J^{(p)} &  & \frac{2}{n}J^{(p,n-p)} &  &  \\
\hline
 & \frac{2}{n}-1 & \frac{2}{n}\e^{\i\left(\alpha_{p+1,p+2}+\alpha_{1,p+1}-\alpha_{1,p+2}\right)} & \cdots & \frac{2}{n}\e^{\i\left(\alpha_{p+1,n}+\alpha_{1,p+1}-\alpha_{1n}\right)} \\
\frac{2}{n}J^{(n-p,p)} & \frac{2}{n}\e^{\i\left(\alpha_{p+2,p+1}+\alpha_{1,p+2}-\alpha_{1,p+1}\right)} & 1-\frac{2}{n} & \cdots & \frac{2}{n}\e^{\i\left(\alpha_{p+2,n}+\alpha_{1,p+2}-\alpha_{1n}\right)} \\
 & \vdots & \vdots & \ddots & \vdots \\
 & \frac{2}{n}\e^{\i\left(\alpha_{p+2,p+1}+\alpha_{1,p+2}-\alpha_{1,p+1}\right)} & \frac{2}{n}\e^{\i\left(\alpha_{n,p+2}+\alpha_{1,p+2}-\alpha_{1n}\right)} & \cdots & 1-\frac{2}{n}
\end{array}\right)\,.
\end{eqnarray}

Now we introduce another diagonal matrix,
\begin{eqnarray}
\check{D}=\diag\left(\underbrace{1,\ldots,1}_{p},1,\e^{\i\left(\alpha_{p+1,p+2}+\alpha_{1,p+1}-\alpha_{1,p+2}\right)},\e^{\i\left(\alpha_{p+1,p+3}+\alpha_{1,p+1}-\alpha_{1,p+3}\right)},\ldots,\e^{\i\left(\alpha_{p+1,n}+\alpha_{1,p+1}-\alpha_{1n}\right)}\right)\,,
\end{eqnarray}
and define $\check{\S}=\check{D}\hat{\S}\check{D}^{-1}$; it is easy to see that the matrix $\check{S}$ has the following properties:
\begin{itemize}
\item the diagonal elements satisfy
\begin{eqnarray}
\left(\check{\S}\right)_{jj}=1-\frac{2}{n}\quad\text{for $j=1,\ldots,p$}\,,\qquad \left(\check{\S}\right)_{jj}=\frac{2}{n}-1\quad\text{for $j=p+1,\ldots,n$}\,,
\end{eqnarray}
\item the off-diagonal elements in the $(p+1)$-st row and in the $(p+1)$-st column are equal to $\frac{2}{n}$,
\item all other off-diagonal elements are of modulus $\frac{2}{n}$, and therefore we may (and will) write $\left(\check{\S}\right)_{jl}=\frac{2}{n}\,\e^{\i\beta_{jl}}$ for certain $\beta_{jl}$,
\item $\check{\S}$ is unitary, since both $\hat{S}$ and $\check{D}$ are unitary,
\item $\check{\S}$ is Hermitean, since $\check{\S}=\check{D}\hat{\S}\check{D}^{-1}=\check{D}\hat{\S}\check{D}^{\dagger}$ and $\hat{S}$ is Hermitean.
\end{itemize}
For any $i>p+1$, the scalar product of the $(p+1)$-st row and the $i$-th row equals
\begin{eqnarray}
-\frac{4}{n^2}\left(n-2-\sum_{j\in\{1,\ldots,n\}\backslash\{p+1,i\}}\e^{\i\beta_{ij}}\right)\,,
\end{eqnarray}
hence, with regard to the unitarity, $\e^{\i\beta_{ij}}=1$ for all $i>p+1$ and $j\neq p+1,j\neq i$. Moreover, since $\check{\S}$ is Hermitean, it holds also $\e^{\i\beta_{ji}}=1$ for all $i>p+1$, $j\neq p+1,j\neq i$. To sum up, we have
\begin{eqnarray}
\check{S}=\left(\begin{array}{c|c}
I^{(p)}-\frac{2}{n}J^{(p)} & \frac{2}{n}J^{(p,n-p)} \\
\hline
\frac{2}{n}J^{(n-p,p)} & -I^{(n-p)}+\frac{2}{n}J^{(n-p)}
\end{array}\right)\,.
\end{eqnarray}
Finally, the scalar product of the first and the $(p+1)$-st line is equal to $(n-2p)\frac{4}{n^2}$, and since it has to vanish due to the unitarity of $\check{S}$, it holds $p=\frac{n}{2}$, which also requires that $n$ is even.
Now we denote $\check{D}\hat{D}$ as $D$, and since it holds $D_{11}=1$, the statement is proved.
\end{proof}

\begin{rem}
It is easy to realize that the position of the number $1$ in the matrices $D$ in the statements (a), (b) and (c) of Proposition~\ref{possibleS} is not important. We have put $1$ at the position $(1,1)$ of $D$, but it can stand wherever on the diagonal.
\end{rem}

Proposition~\ref{possibleS} describes the set of free-like vertex couplings in terms of scattering matrices. Now we will find the corresponding \emph{boundary conditions}, i.e. we show how to set the matrix $T$ in \eqref{ST} to achieve the scattering matrices listed in Proposition~\ref{possibleS}. Theorem~\ref{FreeLike} will fully characterize the subfamily of free-like boundary conditions.

\begin{theorem}\label{FreeLike}
Let the boundary conditions in a vertex of degree $n$ be described by the $ST$-form~\eqref{ST}. It holds:
\begin{itemize}
\item[(a)] If $m=1$ and $T=\left(\begin{array}{ccc}\e^{\i\xi_2}&\cdots&\e^{\i\xi_n}\end{array}\right)$ where $\xi_j\in\R$ for all $j=2,\ldots,n$, then
\begin{eqnarray}
\S=D^{\dagger}\left(-I^{(n)}+\frac{2}{n}J^{(n)}\right)D \qquad \text{for}\quad D=\diag\left(1,\e^{\i\xi_2},\ldots,\e^{\i\xi_n}\right)\,.
\end{eqnarray}
\item[(b)] If $m=n-1$ and $T=\left(\begin{array}{c}\e^{\i\xi_2}\\ \vdots\\\e^{\i\xi_n}\end{array}\right)$ where $\xi_j\in\R$ for all $j=2,\ldots,n$, then
\begin{eqnarray}
\S=D^{\dagger}\left(I^{(n)}-\frac{2}{n}J^{(n)}\right)D \qquad \text{for}\quad D=\diag\left(1,\e^{\i\xi_2},\ldots,\e^{\i\xi_n}\right)\,.
\end{eqnarray}
\item[(c)] Let $n$ be even, $m=\frac{n}{2}$ and $T=\frac{2}{n}X^{\dagger}J^{(\frac{n}{2})}Y$ where $X=\diag\left(1,\e^{\i\xi_2},\ldots,\e^{\i\xi_\frac{n}{2}}\right)$,\\
$Y=\diag\left(\e^{\i\xi_{\frac{n}{2}+1}},\ldots,\e^{\i\xi_n}\right)$ and $\xi_j\in\R$ for all $j=2,\ldots,n$. Let the indexes of the outgoing edges be reindexed by a permutation represented by a permutation matrix $P$. Then the scattering matrix corresponding to the system with reindexed edges equals
\begin{eqnarray}
\S=P^{-1}D^{\dagger}\left(\begin{array}{c|c}-I^{(\frac{n}{2})}+\frac{2}{n}J^{(\frac{n}{2})}&-\frac{2}{n}J^{(\frac{n}{2})}\\\hline-\frac{2}{n}J^{(\frac{n}{2})}&I^{(\frac{n}{2})}-\frac{2}{n}J^{(\frac{n}{2})}\end{array}\right)DP \qquad \text{for}\quad D=\diag\left(1,\e^{\i\xi_2},\ldots,\e^{\i\xi_n}\right)\,.
\end{eqnarray}
%
\end{itemize}
In all the cases listed above, the reflexion and transmission amplitudes satisfy
\begin{equation}\label{RT.}
|{\cal R}_j|=1-\frac{2}{n} \qquad\text{and}\qquad |{\cal T}_{ij}|=\frac{2}{n}\,.
\end{equation}
Moreover, the statement can be inverted: If the reflexion and transmission amplitudes satisfy \eqref{RT.}, then the system corresponds to one of the situations (a), (b), (c) described above.
\end{theorem}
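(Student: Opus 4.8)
\section*{Proof proposal for Theorem~\ref{FreeLike}}

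The plan is to establish the three \emph{forward} implications (a), (b), (c) by direct substitution of the prescribed matrices $T$ into the closed form \eqref{S} of the scattering matrix, and then to obtain the \emph{inverse} implication by combining Proposition~\ref{possibleS} with these same computations. In case (a) the matrix $T=\left(\e^{\i\xi_2}\cdots\e^{\i\xi_n}\right)$ is a single row, so $TT^\dagger=n-1$ is a scalar while $T^\dagger T$ is a rank-one $(n-1)\times(n-1)$ matrix; the Sherman--Morrison identity then gives $\left(I^{(n-1)}+T^\dagger T\right)^{-1}=I^{(n-1)}-\frac1n T^\dagger T$, so all four blocks of \eqref{S} become explicitly computable. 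Simplifying block by block, one checks that the result equals $D^\dagger\left(-I^{(n)}+\frac2n J^{(n)}\right)D$ with $D=\diag\left(1,\e^{\i\xi_2},\ldots,\e^{\i\xi_n}\right)$, using that conjugation by $D$ sends the $(i,j)$ entry to itself multiplied by $\e^{\i(\xi_j-\xi_i)}$. Case (b) is entirely analogous: now $T$ is a column, the roles of $TT^\dagger$ and $T^\dagger T$ are interchanged, and every block changes sign, which yields $D^\dagger\left(I^{(n)}-\frac2n J^{(n)}\right)D$.

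Case (c) is the most laborious step and I expect it to be the main obstacle. With $T=\frac2n X^\dagger J^{(\frac{n}{2})}Y$ I would first record the identities $\left(J^{(\frac{n}{2})}\right)^2=\frac n2 J^{(\frac{n}{2})}$ and $XX^\dagger=YY^\dagger=I$, whence $TT^\dagger=\frac2n X^\dagger J^{(\frac{n}{2})}X$ and $T^\dagger T=\frac2n Y^\dagger J^{(\frac{n}{2})}Y$. Both $I^{(\frac{n}{2})}+TT^\dagger$ and $I^{(\frac{n}{2})}+T^\dagger T$ are thus conjugates (by $X$, resp.\ $Y$) of $I^{(\frac{n}{2})}+\frac2n J^{(\frac{n}{2})}$, whose inverse is $I^{(\frac{n}{2})}-\frac1n J^{(\frac{n}{2})}$ by Sherman--Morrison. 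Propagating these through the four blocks of \eqref{S} and simplifying gives the claimed block matrix up to conjugation by $D$, while the reindexing of the outgoing edges supplies the conjugation by the permutation matrix $P$. A subtle point is that the block matrix produced here carries an opposite overall sign and an opposite off-diagonal sign relative to the matrix listed in Proposition~\ref{possibleS}; these are reconciled by observing that the two matrices are conjugate through a swap of the two halves together with a sign flip on one half, that is, by a further permutation-and-phase transformation which can be absorbed into $P$ and $D$.

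The amplitude relations \eqref{RT.} then follow immediately: conjugation by the unitary diagonal $D$ and by $P$ only redistributes phases, leaving every diagonal modulus equal to $1-\frac2n$ and every off-diagonal modulus equal to $\frac2n$. For the inverse implication I would argue as follows. A free-like coupling is necessarily Fulop--Tsutsui, so its scattering matrix has the form \eqref{S} and is Hermitean and unitary, hence $\S^2=I$ and its eigenvalues lie in $\{\pm1\}$; replacing $T$ by $tT$ with $t\in[0,1]$ keeps $\S$ Hermitean and unitary throughout, so the multiplicity of the eigenvalue $+1$ is constant along the path and therefore equals its value at $T=0$, namely $m$. By Proposition~\ref{possibleS} the matrix $\S$ is one of the three listed forms, and counting the multiplicity of $+1$ forces $m=1$, $m=n-1$, $m=\frac n2$ in the three cases respectively. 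Finally, given $\S$ and $m$, relation \eqref{ST} determines $T$ uniquely (for instance from the off-diagonal block $\S_{12}=\left(I^{(m)}+TT^\dagger\right)^{-1}2T$ together with the diagonal block), so the data $(m,T)$ must coincide with the one produced in the corresponding forward case; equivalently, the forward computations already realize every admissible $\S$, and uniqueness of the $ST$-representation of a boundary condition pins the coupling down to exactly one of (a), (b), (c).
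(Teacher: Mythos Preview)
Your forward computations for (a), (b), (c) are exactly the paper's approach: the paper also substitutes the given $T$ into the block formula \eqref{S}, computes $TT^\dagger$ and $T^\dagger T$, inverts $I+TT^\dagger$ and $I+T^\dagger T$ directly (what you call Sherman--Morrison), and assembles the four blocks. Your observation about the sign discrepancy in case (c) between the block matrix actually produced and the one in Proposition~\ref{possibleS} is well taken; the reconciliation via an extra permutation-and-phase absorbed into $P$ and $D$ is the right way to handle it.

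Where you differ from the paper is the \emph{inverse} implication. The paper disposes of it in one line, ``follows immediately from Proposition~\ref{possibleS}'', implicitly using that the scattering matrix determines the vertex coupling. You instead give a self-contained argument: the homotopy $T\mapsto tT$ (or equivalently a direct trace computation, since $\mathrm{tr}\,\S=2m-n$) shows that the multiplicity of the eigenvalue $+1$ of the Hermitean unitary $\S$ equals $m$, and comparing with the spectra of the three canonical matrices in Proposition~\ref{possibleS} pins down $m\in\{1,n-1,n/2\}$; then $T=(I^{(m)}+\S_{11})^{-1}\S_{12}$ recovers $T$ uniquely from $\S$. This is more explicit than the paper and has the merit of making the role of $m$ transparent, but it is an elaboration of the same route rather than a genuinely different one.
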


\begin{proof}

\medskip
\noindent (a)\quad If $m=1$ and $T=\left(\begin{array}{ccc}\e^{\i\xi_2}&\cdots&\e^{\i\xi_n}\end{array}\right)$, then
\begin{equation}\label{TT}
T T^\dagger = \left(n-1\right)\qquad\text{and}\qquad
\left(T^\dagger T\right)_{ij} = \e^{\i(\xi_j-\xi_i)} \quad\text{for $i,j=2,\ldots,n$}\,.
\end{equation}
This can be used to calculate the inverses,
\begin{eqnarray}
&&\!\!\!\!\!\!\!\!\!\!\!\!
\left( I^{(1)} + T T^\dagger \right)^{-1} 
= \left(\frac{1}{n}\right)\,,
\nonumber \\
&&\!\!\!\!\!\!\!\!\!\!\!\!
\left( I^{(n-1)} + T^\dagger T \right)^{-1} 
= I^{(n-1)}-\frac{1}{n} T^\dagger T .
\end{eqnarray}
Four submatrices of ${\cal S}$ are given by
\begin{eqnarray}
&&\!\!\!\!\!\!\!\!\!\!\!\!
\left( I^{(1)} + T T^\dagger \right)^{-1} 2T
= \frac{2}{n}T\,,
\\ \nonumber
&&\!\!\!\!\!\!\!\!\!\!\!\!
\left( I^{(n-1)} + T^\dagger T \right)^{-1} 2T^\dagger
= \frac{2}{n} T^\dagger\,,
\end{eqnarray}
and 
\begin{eqnarray}
&&\!\!\!\!\!\!\!\!\!\!\!\!\!\!\!\!\!\!\!\!\!\!\!\!
\left( I^{(1)} + T T^\dagger \right)^{-1} \left( I^{(m)} - T T^\dagger \right)
= \left(-1+\frac{2}{n}\right)\,,
\\ \nonumber
&&\!\!\!\!\!\!\!\!\!\!\!\!\!\!\!\!\!\!\!\!\!\!\!\!\!\!\!\!
-\left( I^{(n-1)} + T^\dagger T \right)^{-1} \left( I^{(n-m)} + T^\dagger T \right)
= -I^{(n-1)}+\frac{2}{n} T^\dagger T\,.
\end{eqnarray}
If we employ the matrix $D$ defined in the part (a) of the Theorem, we obtain
\begin{eqnarray}
\S=\left(\begin{array}{cc}
-1+\frac{2}{n} & \frac{2}{n}T \\
\frac{2}{n}T^{\dagger} & -I^{(n-1)}+\frac{2}{n} T^\dagger T
\end{array}\right)=
D^{\dagger}\left(-I^{(n)}+\frac{2}{n}J^{(n)}\right)D\,.
\end{eqnarray}
Finally, Equalities~\eqref{RT.} hold due to Eq.~\eqref{TT}.

\medskip
\noindent (b)\quad If $m=n-1$ and $T=\left(\begin{array}{c}\e^{\i\xi_2}\\ \vdots\\\e^{\i\xi_n}\end{array}\right)$,
one can proceed in a way similar to the previous case. We have
\begin{equation}\label{TTn-1}
\left(T T^\dagger\right)_{ij} = e^{\i(\xi_i-\xi_j)} \quad\text{for $i,j=2,\ldots,n$}\,,
\quad
T^\dagger T = \left(n-1\right)
\end{equation}
and
\begin{equation*}
\left( I^{(n-1)} + T T^\dagger \right)^{-1} 
= I^{(n-1)}-\frac{1}{n} T T^\dagger\,,\qquad
\left( I^{(1)} + T^\dagger T \right)^{-1} 
= \left(\frac{1}{n}\right)\,,
\end{equation*}
then, with regard to the definition of the matrix $D$ from the part (b) of the Theorem, we have
\begin{eqnarray}
{\cal S}=
\left(\begin{array}{cc}
I^{(n-1)}-\frac{2}{n} T T^\dagger & \frac{2}{n} T \\
\frac{2}{n} T^\dagger & 1-\frac{2}{n}
\end{array}\right)=
D^{\dagger}\left(I^{(n)}-\frac{2}{n}J^{(n)}\right)D\,,
\end{eqnarray}
and \eqref{RT.} holds due to Eq.~\eqref{TTn-1}.

\medskip

\noindent (c)\quad Let us begin with the system with non-perturbed indexing of the edges. If $n$ is even, $m=\frac{n}{2}$ and $T=\frac{2}{n}X^{\dagger}J^{(\frac{n}{2})}Y$ for $X,Y$ defined above, it holds
\begin{eqnarray}
T T^\dagger = \frac{2}{n}X^{\dagger}J^{(\frac{n}{2})}X\,,
\qquad
T^\dagger T = \frac{2}{n}Y^{\dagger}J^{(\frac{n}{2})}Y\,.
\end{eqnarray}
It is easy to verify that
\begin{eqnarray}
&&\!\!\!\!\!\!\!\!\!\!\!\!
\left( I^{(\frac{n}{2})} + T T^\dagger \right)^{-1} 
= I^{(\frac{n}{2})}-\frac{1}{n} X^{\dagger}J^{(\frac{n}{2})}X\,,
\nonumber \\
&&\!\!\!\!\!\!\!\!\!\!\!\!
\left( I^{(\frac{n}{2})} + T^\dagger T \right)^{-1} 
= I^{(\frac{n}{2})}-\frac{1}{n} Y^{\dagger}J^{(\frac{n}{2})}Y\,.
\end{eqnarray}
Hence, the four submatrices of ${\cal S}$ are given by
\begin{eqnarray}
\label{offdiag}
&&\!\!\!\!\!\!\!\!\!\!\!\!
\left( I^{(\frac{n}{2})} + T T^\dagger \right)^{-1} 2T
= \frac{2}{n}X^{\dagger}J^{(\frac{n}{2})}Y\,,
\\ \nonumber
&&\!\!\!\!\!\!\!\!\!\!\!\!
\left( I^{(\frac{n}{2})} + T^\dagger T \right)^{-1} 2T^\dagger
= \frac{2}{n}Y^{\dagger}J^{(\frac{n}{2})}X
\end{eqnarray}
and 
\begin{eqnarray}
\label{diag}
&&\!\!\!\!\!\!\!\!\!\!\!\!\!\!\!\!\!\!\!\!\!\!\!\!
\left( I^{(\frac{n}{2})} + T T^\dagger \right)^{-1} \left( I^{(\frac{n}{2})} - T T^\dagger \right)
= I^{(\frac{n}{2})}-\frac{2}{n}X^{\dagger}J^{(\frac{n}{2})}X\,,
\\ \nonumber
&&\!\!\!\!\!\!\!\!\!\!\!\!\!\!\!\!\!\!\!\!\!\!\!\!
-\left( I^{(\frac{n}{2})} + T^\dagger T \right)^{-1} \left( I^{(\frac{n}{2})} - T^\dagger T \right)
= -I^{(\frac{n}{2})}+\frac{2}{n}Y^{\dagger}J^{(\frac{n}{2})}Y\,.
\end{eqnarray}
Therefore, the matrix $\S_0$ corresponding to the system with unperturbed indexing can be rearranged into the form
\begin{eqnarray}
{\cal S}_0=
\left(\begin{array}{cc}
I^{(\frac{n}{2})}-\frac{2}{n}X^{\dagger}J^{(\frac{n}{2})}X & \frac{2}{n}X^{\dagger}J^{(\frac{n}{2})}Y \\
\frac{2}{n}Y^{\dagger}J^{(\frac{n}{2})}X & -I^{(\frac{n}{2})}+\frac{2}{n}Y^{\dagger}J^{(\frac{n}{2})}Y
\end{array}\right)=
D^{\dagger}\left(\begin{array}{c|c}-I^{(\frac{n}{2})}+\frac{2}{n}J^{(\frac{n}{2})}&-\frac{2}{n}J^{(\frac{n}{2})}\\\hline-\frac{2}{n}J^{(\frac{n}{2})}&I^{(\frac{n}{2})}-\frac{2}{n}J^{(\frac{n}{2})}\end{array}\right)D\,,
\end{eqnarray}
where we have employed the matrix $D$ defined in the part (c) of the Theorem.

The perturbation of edge indexes result in a simultaneous perturbation of the columns and rows of the scattering matrix in the way $\S=P^{-1}\S_0P$, therefore
\begin{eqnarray}
{\cal S}=
P^{-1}D^{\dagger}\left(\begin{array}{c|c}-I^{(\frac{n}{2})}+\frac{2}{n}J^{(\frac{n}{2})}&-\frac{2}{n}J^{(\frac{n}{2})}\\\hline-\frac{2}{n}J^{(\frac{n}{2})}&I^{(\frac{n}{2})}-\frac{2}{n}J^{(\frac{n}{2})}\end{array}\right)DP\,.
\end{eqnarray}
It is easy to check that the diagonal elements of $\mathcal{S}$ are equal $1-\frac{2}{n}$ in modulus and the off-diagonal elements have moduli $\frac{2}{n}$, therefore \eqref{RT.} is satisfied.

\medskip

The last statement of Theorem~\ref{FreeLike} follows immediately from Proposition~\ref{possibleS}.
\end{proof}

\begin{rem}
If we restrict the system to be {\it time-reversal symmetric}, only real numbers are allowed 
for the elements of the matrix $T$, as well as for ${\cal R}$ and ${\cal T}$. This leaves the possible choice of phases to be either zero or $\pi$, therefore the $n-1$ parameters of the family of the free-like vertex coupling turn into exactly $2^{n-1}$ possibilities.
\end{rem}

\section{Finite approximation}
In this section we propose a way how a vertex with a general connecting condition 
of the Fulop-Tsutsui type can be approximated (and thus constructed) 
using only more regular objects, namely $\delta$-couplings 
and in some cases vector potentials. Our approximating model is 
based on the one presented in \cite{CET10}, but we modify it 
in order to make it fit better to the scale-invariant coupling. As a result, 
the number of the $\delta$-couplings that have to be involved is reduced to $n$; 
we remind that the model in \cite{CET10} needs generally $\frac{n(n+1)}{2}$ of them. 
The key idea allowing such a reduction,
which is a generalization of two-$\delta$ construction of 
Fulop-Tsutsui interaction in $n=2$ graph \cite{ZO10}, is explained below.

Let us consider a star graph with $n$ outgoing edges with the scale invariant vertex coupling given by the condition
\begin{subequations}\label{FTFT}
\begin{gather}
\psi_j'(0)+\sum_{l=m+1}^n t_{jl}\psi'_l(0)
=0 \qquad j=1, ..., m\,, \label{j<=m} \\
0=-\sum_{k=1}^m \overline{t_{kj}}\psi_k(0)+\psi_j(0)
\qquad j=m+1,\ldots,n \label{j>=m+1} \,.
\end{gather}
\end{subequations}
The columns of the matrix $T$ are indexed by the numbers $m+1,m+2,\ldots,n$ according to Remark~\ref{shift}. Thanks to this shift, the formulas will be simpler.

%
\noindent The approximating graph is constructed as follows.
\begin{itemize}
\item We take $n$ halflines, each parametrized by
$x\in[0,+\infty)$, with the endpoints denoted as $V_j$, $j=1,\ldots,n$.
\item For certain pairs $\{j,k\}$ ($j,k\in\{1,\ldots,n\}$), we join halfline
endpoints $V_j,V_k$ by a connecting edge of the length $\frac{d}{\gamma_{jk}}$, parametrized by $x\in[0,\frac{d}{\gamma_{jk}}]$. The criterion for which pairs $\{j,k\}$ the vertices $V_j,V_k$ are connected, as well as the values of the positive coefficients $\gamma_{jk}$, will be specified later. If the endpoints $V_j,V_k$ are not connected, we may set formally $\gamma_{jk}=0$, because it leads to the ``infinite length'' $\frac{d}{\gamma_{jk}}=\infty$. Note that $\gamma_{jk}=\gamma_{kj}$ for all $j,k=1,\ldots,n$.

The fact that the lenght of the connecting lines may vary represents the modification of the model \cite{CET10} that we have mentioned in the beginning of this section.
\item We put a $\delta$-coupling with the parameter $\alpha_j(d)$ at the vertex $V_j$ for each $j=1,..., n$.
\item On each connecting edge described above we put a constant vector
potential supported by the interval $\left[\frac{d}{4\gamma_{jk}},\frac{3d}{4\gamma_{jk}}\right]$. Its strength, which may be also zero, will be denoted by $A_{(j,k)}(d)$ (if the variable on the line grows from $V_j$ to $V_k$) or $A_{(k,j)}(d)$ (if the variable grows from $V_k$ to $V_j$). Both orientation are allowed and obviously $A_{(k,j)}(d)=-A_{(j,k)}(d)$.
\end{itemize}
In the rest of the section, we will specify the dependence of $\alpha_j(d)$, $\gamma_{jk}$ and
$A_{(j,k)}(d)$ on the matrix $T$ and on the parameter $d$. At first, we introduce the notation for the wave function components on the edges. 
\begin{itemize}
\item For all $j\in\{1,\ldots,n\}$, the wave function on the $j$-th half line is denoted by the symbol $\psi_j$.
\item The wave function on the line connecting the points $V_j$ and $V_k$ is denoted by $\varphi_{(j,k)}$, and obviously one has to take the line orientation into account (we allow both orientations, cf. the paragraph on vector potentials above). We adopt the following convention: If the value $0$ corresponds to the endpoint $V_j$ and the value $\frac{d}{\gamma_{jk}}$ to the endpoint $V_k$, then the wave function is denoted by $\varphi_{(j,k)}$. In the case of the opposite direction, the wave function is denoted by $\varphi_{(k,j)}$. We have $\varphi_{(k,j)}(x)=\varphi_{(j,k)}\left(\frac{d}{\gamma_{jk}}-x\right)$ for all $x\in\left[0,\frac{d}{\gamma_{jk}}\right]$.
\end{itemize}

During the derivation of the expressions for $\alpha_j(d)$, $\gamma_{jk}$ and
$A_{(j,k)}(d)$, we will refrain from indicating the dependence of the parameters $\alpha_j(d),\,
A_{(j,k)}(d)$ on $d$, i.e. we will write simply $\alpha_j,\,A_{(j,k)}$. We will also use symbol $N_j$ standing for the set containing indices of all the edges that are joined to the $j$-th one, i.e. $k\in N_j\Leftrightarrow \gamma_{jk}\neq0$.

The $\delta$ coupling at the endpoint of the $j$-th half line ($j=1, ..., n$) means
\begin{equation}\label{II.serie}
\begin{array}{c}
\psi_j(0)=\varphi_{(j,k)}(0)=\varphi_{(k,j)}\left(\frac{d}{\gamma_{jk}}\right)\quad \text{for all}\; k\in N_j\,,
\\ \vspace{.5em}
\psi_j'(0)+\sum_{k\in N_j}\varphi_{(j,k)}'\left(\frac{d}{\gamma_{jk}}\right)=\alpha_j\cdot\psi_j(0)\,.
\end{array}
\end{equation}
Further relations which will help us to find the parameter
dependence on $d$ come from Taylor expansion. Consider first the
case without any added potential,
\begin{eqnarray}\label{bez pot.}
&& \begin{aligned}
&\varphi^0_{(j,k)}\left(\frac{d}{\gamma_{jk}}\right)
=\varphi^0_{(j,k)}(0)+\frac{d}{\gamma_{jk}}\, \varphi^{0\prime}_{(j,k)}(0)+\O(d^2)\,,\\
&\varphi^{0\prime}_{(j,k)}\left(\frac{d}{\gamma_{jk}}\right)
=\varphi^{0\prime}_{(j,k)}(0)+\O(d)\,.
\end{aligned}
\end{eqnarray}
If the vector potentials are involved, then the relations \eqref{bez pot.} change according to the following lemma.
\begin{lem}\label{potencial}
Let us consider a line parametrized by the variable $x\in[0,L]$,
$L\in(0,+\infty)$, and let $H$ denote a Hamiltonian
of a particle on this line acting as
\begin{equation}\label{H bez v.p.}
H=-\frac{\d^2}{\d x^2}\,.
\end{equation}
We denote by
$\psi^{s,t}$ the solution of $H\psi=k^2\psi$ with the boundary
values $\psi^{s,t}(0)=s$, ${\psi^{s,t}}'(0)=t$. Consider the same
line with a constant vector potential $A$ supported by the interval $[L_1,L_2]$ where $0<L_1<L_2<L$; the Hamiltonian is consequently given by
\begin{equation}\label{H s v.p.}
H_A=\left(-\i\frac{\d}{\d x}-A\right)^2\,.
\end{equation}
Let $\psi_A^{s,t}$ denote the solution of $H_A\psi=k^2\psi$ with
the same boundary values as before, i.e. $\psi_A^{s,t}(0)=s$,
${\psi_A^{s,t}}'(0)=t$. Then the function $\psi_A^{s,t}$ satisfies
$$
\psi_A^{s,t}(L)=\e^{iA(L_2-L_1)}\cdot\psi^{s,t}(L)
\qquad\text{and}\qquad \psi_A^{s,t\prime}(L)
=\e^{iA(L_2-L_1)}\cdot\psi^{s,t\prime}(L)\,.
$$
\end{lem}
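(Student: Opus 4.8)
The plan is to prove Lemma~\ref{potencial} by a \emph{gauge transformation} that removes the constant vector potential on $[L_1,L_2]$ at the price of a pure phase accumulated across that interval. First I would introduce the continuous, piecewise-linear gauge function
\[
\chi(x)=\begin{cases} 0, & x\in[0,L_1],\\ A\,(x-L_1), & x\in[L_1,L_2],\\ A\,(L_2-L_1), & x\in[L_2,L], \end{cases}
\]
which satisfies $\chi'(x)=A$ on $(L_1,L_2)$ and $\chi'(x)=0$ elsewhere, together with the boundary data $\chi(0)=0$, $\chi'(0)=0$, $\chi(L)=A(L_2-L_1)$ and $\chi'(L)=0$. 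I would then set $\phi:=\e^{-\i\chi}\,\psi_A^{s,t}$ and aim to show that $\phi$ solves the \emph{free} equation $-\phi''=k^2\phi$ on the whole of $[0,L]$.

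The verification rests on the elementary identity $\left(-\i\frac{\d}{\d x}-A\right)\!\left(\e^{\i\chi}\phi\right)=-\i\,\e^{\i\chi}\phi'$, which holds on each subinterval because there $\chi'$ equals the local value of $A$ (either $A$ or $0$). Applying the operator twice gives $H_A\!\left(\e^{\i\chi}\phi\right)=-\e^{\i\chi}\phi''$, so that $H_A\psi_A^{s,t}=k^2\psi_A^{s,t}$ is on each subinterval equivalent to $-\phi''=k^2\phi$. I expect the only genuine obstacle to be the matching at the two jump points $L_1,L_2$ of the potential: the self-adjoint realization of $H_A$ requires $\psi_A^{s,t}$ to be continuous together with its covariant derivative $\psi_A^{s,t\prime}-\i A\psi_A^{s,t}$, and since $\chi$ is continuous this translates, via $\phi'=\e^{-\i\chi}\bigl(\psi_A^{s,t\prime}-\i\chi'\psi_A^{s,t}\bigr)$, into ordinary continuity of $\phi$ and $\phi'$ across $L_1$ and $L_2$. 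Hence $\phi$ is a bona fide $C^1$ (and therefore smooth) solution of the free equation on all of $[0,L]$.

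It then remains to identify $\phi$ and read off the values at the right endpoint. Because $\chi(0)=0$ and $\chi'(0)=0$, the Cauchy data transfer unchanged, $\phi(0)=s$ and $\phi'(0)=t$, so uniqueness of the solution of $-\phi''=k^2\phi$ with prescribed data at $x=0$ forces $\phi\equiv\psi^{s,t}$. Finally, evaluating $\psi_A^{s,t}=\e^{\i\chi}\phi$ and its derivative at $x=L$, where $\chi(L)=A(L_2-L_1)$ and $\chi'(L)=0$, yields $\psi_A^{s,t}(L)=\e^{\i A(L_2-L_1)}\psi^{s,t}(L)$ and $\psi_A^{s,t\prime}(L)=\e^{\i A(L_2-L_1)}\psi^{s,t\prime}(L)$, which is precisely the claim.
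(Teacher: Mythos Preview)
Your argument is correct. The gauge transformation $\psi_A^{s,t}=\e^{\i\chi}\phi$ with the piecewise-linear phase $\chi$ is exactly the standard device for this kind of statement: the identity $(-\i\partial_x-\chi')(\e^{\i\chi}\phi)=-\i\,\e^{\i\chi}\phi'$ intertwines $H_A$ with the free operator on each subinterval, and your treatment of the matching at $L_1,L_2$ is the right one, since self-adjointness of $H_A$ with a step vector potential forces continuity of $\psi$ and of the covariant derivative $-\i\psi'-A\psi$, which via $\phi'=\e^{-\i\chi}(\psi_A'-\i\chi'\psi_A)$ becomes ordinary $C^1$ matching for $\phi$. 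The initial data and the final evaluation at $x=L$ are handled correctly.

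As for comparison: the paper does not actually prove the lemma; it simply states that ``the statement can be proven easily'' and points to \cite{SMMC99,CET10}. Your gauge-transformation proof is precisely the argument one finds spelled out in those references (in particular in \cite{CET10}), so there is no real methodological divergence---you have filled in what the paper leaves implicit.
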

The statement can be proven easily, cf. \cite{SMMC99} or \cite{CET10}. Lemma~\ref{potencial} says that the constant vector potential $A$ supported on the line segment of the length $L_2-L_1$ shifts the phase of the wave function and its derivative by $A(L_2-L_1)$. 
Its application on \eqref{bez pot.} gives the relations
\begin{eqnarray}\label{III.serie}
\begin{aligned}
&\varphi_{(j,k)}\left(\frac{d}{\gamma_{jk}}\right) =\e^{\i \frac{d}{2\gamma_{jk}}A_{(j,k)}}\left[\varphi_{(j,k)}(0)+\frac{d}{\gamma_{jk}}\, \varphi_{(j,k)}'(0)\right]+\O(d^2)\,,\\
&\varphi_{(j,k)}'\left(\frac{d}{\gamma_{jk}}\right)=\e^{\i \frac{d}{2\gamma_{jk}}A_{(j,k)}}\cdot\varphi_{(j,k)}'(0)+\O(d)\,.
\end{aligned}
\end{eqnarray}

Equations \eqref{II.serie}
and~\eqref{III.serie} connect values of
the wave functions and their derivatives at all the vertices $V_j$, $j=1, ..., n$. Next we
will eliminate the terms with the ``auxiliary'' functions
$\varphi_{(j,k)}$, and find the relations between $2n$ terms
$\psi_j(0)$, $\psi_j'(0)$, $j=1,\ldots,n$.

We start the elimination by expressing $\varphi_{(j,k)}'(0)$ from the first one of the relations~\eqref{III.serie}, using the fact that $\varphi_{(j,k)}\left(\frac{d}{\gamma_{jk}}\right)=\psi_j(0)$ and $\varphi_{(j,k)}\left(\frac{d}{\gamma_{jk}}\right)=\psi_k(0)$ following from the first equation of~\eqref{II.serie}:
\begin{equation}\label{hvezdicka}
d\: \varphi_{(j,k)}'(0)=\gamma_{jk}\,\e^{-\i \frac{d}{2\gamma_{jk}}A_{(j,k)}}\psi_k(0)
-\gamma_{jk}\psi_j(0)+\O(d^2)\,.
\end{equation}
Then we sum the last equation over $k$ which yields
\begin{equation}\label{suma}
d\sum_{k=1}^n\varphi_{(j,k)}'(0)=\sum_{k=1}^n\gamma_{jk}\,\e^{-\i \frac{d}{2\gamma_{jk}}A_{(j,k)}}\psi_k(0)
-\sum_{k=1}^n\gamma_{jk}\cdot\psi_j(0)+\O(d^2)\,.
\end{equation}
According to the second equation of \eqref{II.serie}, the LHS of Equation~\eqref{suma} is equal to $-d\psi_j'(0)+d\alpha_j\psi_j(0)$. Therefore \eqref{suma} can be rewritten in the form
\begin{equation}\label{j-ty radek}
d\psi_j'(0)=\left(d\alpha_j+\sum_{k=1}^n\gamma_{jk}\right)\psi_j(0)-
\sum_{k=1}^n\gamma_{jk}\,\e^{-\i \frac{d}{2\gamma_{jk}}A_{(j,k)}}\psi_k(0)
+\O(d^2)
\end{equation}
for all $j=1, ..., n$.

Our objective is to choose $\alpha_j(d)$, $\gamma_{jk}$ and
$A_{(j,k)}(d)$ in such a way that in the limit $d\to0$ the system
of relations~\eqref{j-ty radek} with $j=1, ..., n$ tends to the
system of $n$ boundary conditions~\eqref{FTFT}. Recall that the lines
of~\eqref{FTFT} are of two types. We start with the type \eqref{j>=m+1}. corresponding to
$j\geq m+1$. We see immediately that if the parameters $\alpha_j$, $\gamma_{jk}$ and $A_{(j,k)}$ satisfy
\begin{subequations}\label{implicitni}
\begin{gather}
\gamma_{jk}=0\qquad\text{for $k>m$}\,, \\
\gamma_{jk}\,\e^{-\i \frac{d}{2\gamma_{jk}}A_{(j,k)}}=\overline{t_{kj}} \qquad \text{for $k\leq m$}\,, \\
d\alpha_j+\sum_{k=1}^n\gamma_{jk}=1\,,
\end{gather}
\end{subequations}
then \eqref{j-ty radek} leads to
\begin{equation}\label{j>=m+1.}
\psi_j'(0)+\sum_{l=m+1}^n\overline{t_{jl}}\psi'_l(0)
=0 \qquad j=1, ..., m\,,
\end{equation}
in the limit $d\to0$, and therefore Eq.~\eqref{j>=m+1} is satisfied.
Equations~\eqref{implicitni} allow us to express $\alpha_j(d)$, $\gamma_{jk}$ and
$A_{(j,k)}(d)$ for any $j\geq m+1$, $k=1,\ldots,n$:
\begin{itemize}
\item If $k>m$ or $k\leq m\wedge t_{kj}=0$, then $\gamma_{jk}=0$, and thus $V_j$ and $V_k$ are not connected.
\item If $k\leq m$ and $t_{kj}\neq0$, then there is a connecting line between $V_j$ and $V_k$ and it holds
\begin{subequations}\label{spoj}
\begin{gather}
\gamma_{jk}=|t_{kj}|\,, \label{gamma} \\
A_{(j,k)}=2\frac{|t_{kj}|}{d}\arg t_{kj}\,, \label{A} \\
\alpha_j=\frac{1}{d}\left(1-\sum_{k=1}^m |t_{kj}|\right)\,. \label{alpha}
\end{gather}
\end{subequations}
\end{itemize}

Let us proceed to the case $j\leq m$. A
substitution of~\eqref{j-ty radek} with $j\in\{1,\ldots,m\}$ into the left-hand side
of~\eqref{j<=m} leads to the formula
\begin{multline}\label{j-ty radek1}
\psi_j'(0)+\sum_{l=m+1}^n t_{jl}\cdot\psi_l'(0)\\
=\left(\alpha_j+\frac{1}{d}\sum_{h=1}^n\gamma_{jh}\right)\psi_j(0)
-\frac{1}{d}\sum_{k=1}^n\gamma_{jk}\,\e^{-\i \frac{d}{2\gamma_{jk}}A_{(j,k)}}\psi_k(0)+\\
+\sum_{l=m+1}^n
t_{jl}\left[\left(\alpha_l+\frac{1}{d}\sum_{h\in N_l}\gamma_{lh}\right)\psi_l(0)
-\frac{1}{d}\sum_{k=1}^n\gamma_{lk}\,\e^{-\i \frac{d}{2\gamma_{lk}}A_{(l,k)}}\psi_k(0)\right]+\O(d)\,,
\end{multline}
and when we substitute from Equations~\eqref{spoj}, we arrive after a manipulation at
\begin{multline}\label{j-ty radek11}
\psi_j'(0)+\sum_{l=m+1}^n t_{jl}\cdot\psi_l'(0)\\
=\left(\alpha_j+\frac{1}{d}\sum_{h=1}^m\gamma_{jh}+\frac{1}{d}\sum_{h=m+1}^n |t_{jh}|-\frac{1}{d}\sum_{l=m+1}^n t_{jl}\overline{t_{jl}}
\right)\psi_j(0)\\
-\frac{1}{d}\sum_{k=1}^m\left(\gamma_{jk}\,\e^{-\i \frac{d}{2\gamma_{jk}}A_{(j,k)}}
+\sum_{l=m+1}^nt_{jl}\overline{t_{kl}}\right)\psi_k(0)+\O(d)\,.
\end{multline}

Similarly as above, we compare~\eqref{j-ty radek11} with~\eqref{j<=m} and write down the conditions for the missing parameters $\alpha_j$, $\gamma_{jk}$ and $A_{(j,k)}$ ($j\leq m$) that make the right-hand side of~\eqref{j-ty radek11} tend to the right-hand side of~\eqref{j<=m} (i.e. to $0$) as $d\to0$. The conditions are
\begin{subequations}
\begin{gather}
\gamma_{jk}\,\e^{-\i \frac{d}{2\gamma_{jk}}A_{(j,k)}}+\sum_{l=m+1}^nt_{jl}\overline{t_{kl}}=0\,, \\
\alpha_j+\frac{1}{d}\sum_{h=1}^m\gamma_{jh}+\frac{1}{d}\sum_{h=m+1}^n |t_{jh}|-\frac{1}{d}\sum_{l=m+1}^n |t_{jl}|^2=0\,,
\end{gather}
\end{subequations}
hence we derive explicit expressions for $\alpha_j$, $\gamma_{jk}$ and $A_{(j,k)}$ ($j\leq m$):
\begin{itemize}
\item If $\sum_{l=m+1}^nt_{jl}\overline{t_{kl}}=0$, then $\gamma_{jk}=0$, i.e. there is no connecting line between $V_j$ and $V_k$.
\item If $\sum_{l=m+1}^nt_{jl}\overline{t_{kl}}\neq0$, then there is a connection line between $V_j$ and $V_k$, and one can set
\begin{subequations}
\begin{gather}
\gamma_{jk}=\left|\sum_{l=m+1}^n t_{jl}\overline{t_{kl}}\right|\,, \label{gamma_} \\
A_{(j,k)}=\frac{2\i}{d}\left|\sum_{l=m+1}^n t_{jl}\overline{t_{kl}}\right|\cdot\arg\left(-\sum_{l=m+1}^n t_{jl}\overline{t_{kl}}\right)\,, \label{A_} \\
\alpha_j=\frac{1}{d}\left(\sum_{l=m+1}^n |t_{jl}|^2-\sum_{k=1}^m\left|\sum_{l=m+1}^n t_{jl}\overline{t_{kl}}\right|-\sum_{l=m+1}^n |t_{jl}|\right)\,. \label{alpha_}
\end{gather}
\end{subequations}
\end{itemize}

The norm resolvent convergence of this approximation scheme can be proven
in a similar fashion as in \cite{CET10}.
\section{Prospects}

Fulop-Tsutsui family of singular vertex is 
potentially useful to bring about simplicity and solvability
to quantum systems made out of connected graphs.
It has been shown in \cite{HC06, SV09}
that ``quantum chaotic'' features are found in a one dimensional
system with a Fulop-Tsutsui point interaction.
It might be possible to show the existence of spectral fluctuation 
usually associated to chaotic system in two and three dimension
in a star graph with Fulop-Tsutsui vertex of degree as small as $n=3$. 
%

\section*{Acknowledgments}

The research was supported  by the 
Japanese Ministry of Education, Culture, Sports, Science and Technology 
under the Grant number 21540402.


\end{document}